\newtheorem{theorem}{Theorem}
\newtheorem{lemma}{Lemma}
\newtheorem{corollary}{Corollary}
\begin{document}

\title{Topological properties on  the diameters of the integer simplex}

\author
{{\large Meijie Ma}  \\
{\small Department of Mathematics }   \\
{\small Zhejiang Normal University, Jinhua 321004, China}  \\
}
\date{}
\maketitle

\setlength{\baselineskip}{24pt}

\begin{abstract}
Wide diameter $d_\omega(G)$ and fault-diameter $D_\omega(G)$ of an
interconnection network $G$ have been recently studied by many
authors. We determine the wide diameter and fault-diameter of the
integer simplex $T_m^n$. Note that $d_1(T_m^n)=D_1(T_m^n)=
d(T_m^n)$, where $d(T_m^n)$ is the diameter of $T_m^n$. We prove
that $d_\omega(T_m^n)=D_\omega(T_m^n)= d(T_m^n)+1$ when
$2\leq\omega\leq n$. Since a triangular pyramid $TP_L$ is $T_L^3$,
we have $d_\omega(TP_L)=D_\omega(TP_L)= d(TP_L)+1$ when
$2\leq\omega\leq 3$.
\end{abstract}

\vskip6pt{\bf Keywords}:\ integer simplex; triangular pyramid; wide
diameter; fault-diameter.

\section{Introduction}

An interconnection network is conveniently represented by an
undirected graph. The vertices(edges) of the graph represent the
nodes(links) of the network. As a topology for an interconnection
network of a multiprocessor system, the triangular pyramid (tripy
for short), is proposed by Razavi and Sarbazi-Azad in~\cite{rs10}.
Some basic properties such as Hamiltonian-connectivity, pancyclicity
and a routing algorithm were investigated in the paper. We studied
other properties such as symmetry, connectivity and fault-tolerant
vertex-pancyclicity in~\cite{mw13}.

Reliability and efficiency are important criteria in the design of
interconnection networks. In graph theory and the study of
fault-tolerance and transmission delay of networks, wide diameter
and fault-diameter are two very important parameters and have been
studied by many researchers. The diameters of  Cartesian product
graphs were studied in~\cite{bez09, bz08, bz10, ez13, ez15, xxh05}.
The parameters of some well-known networks such as hypercube,
crossed cube etc. were studied in~\cite{cw09, c10, k15, kllht09,
lcch98, sl12, ylcz05}.

Wide diameter of a graph, which combines connectivity with diameter,
is a parameter that measures simultaneously the fault-tolerance and
efficiency of parallel processing computer networks. Let $u$ and $v$
be two vertices of a graph $G$. The distance between $u$ and $v$,
denoted by $d_G(u,v)$, is the length of the shortest path between
them. The diameter of $G$, denoted by $d(G)$, is the maximum
distance between any two vertices. The connectivity $\kappa(G)$ of a
graph $G$ is the minimum number of vertices whose removal results in
a disconnected or trivial network. We say that $G$ is $k$-connected
for any $0 < k \leq \kappa(G)$. According to Menger's theorem, there
are $k$ disjoint paths between any two vertices in a $k$-connected
network. Let $G$ be a $k$-connected graph with $1\leq \omega \leq k
$. The $\omega$-wide diameter $d_\omega(G)$ of $G$ is the minimum
$\ell$ such that there exist $\omega$ internally vertex disjoint
paths of length at most $\ell$ from $u$ to $v$ for any two vertices
$u$ and $v$. Throughout this paper, ``disjoint paths" always means
``internally vertex disjoint paths". In particular, $d_1(G)$ is just
the diameter $d(G)$ of $G$. It is easy to see that
$$d(G)=d_1(G)\leq d_2(G)\leq \cdots \leq d_{k-1}(G)\leq d_k(G).
$$

Failures are inevitable when a network is put in use. Therefore, it
is significant to consider faulty networks. The fault-diameter can
be used to estimate the effect on the diameter when faults occur. A
small fault-diameter is desirable because the delay would be shorter
when some nodes fail. The fault-diameter with faulty vertices was
first studied by the authors in~\cite{kk87}. The
$(\omega-1)$-fault-diameter of a graph $G$ is defined as
$$D_\omega(G)=\max\{d(G-F): F\subseteq V(G), |F|<\omega\}$$
where $G-F$ denotes the subgraph induced by $V(G)-F$. Note that
$D_\omega(G)<\infty$ if and only if $G$ is $\omega$-connected. It is
also clear that
$$d(G)=D_1(G)\leq D_2(G)\leq \cdots \leq
D_{k-1}(G)\leq D_k(G). $$

It is well known (see~\cite{lcch98}) that for any $k$-connected
graph $G$ and any integer $\omega, 1\leq \omega \leq k$, we have
$D_\omega(G)\leq d_\omega(G)$, where the equality holds for some
well-known networks. However, it's difficult to determine the wide
diameter or fault-diameter of the tripy according to its definition
in~\cite{rs10}. Fortunately, we find that the tripy $TP_L$ is a
special integer simplex, and we determine the wide diameter and
fault-diameter of the integer simplex. The two kinds of diameters of
the tripy are deduced from the results of the integer simplex.

The rest of this paper is organized as follows. We give some
definitions and notations in Section 2. The main results are given
in Section 3.

\section{The Integer Simplex $T_m^n$}

For graph-theoretical terminology and notation not defined here, we
follow~\cite{x01}. We first restate the definitions of triangular
mesh and tripy originally proposed by Razavi and Sarbazi-Azad for
completeness.

\begin{figure}[h]
\begin{center}
\begin{pspicture}(1.75,2.6)(6.5,7)
\psset{radius=.11, xunit=1.5, yunit=1.4}

\Cnode(3.,4.44){00}\rput(3.,4.7){\scriptsize $(0,0)$}
\Cnode(2.5,3.58){10}\rput(2.1,3.58){\scriptsize $(1,0)$}
\Cnode(3.5,3.58){01}\rput(3.85,3.58){\scriptsize $(0,1)$}
\Cnode(2,2.72){20}\rput(1.6,2.72){\scriptsize $(2,0)$}
\Cnode(3,2.72){11}\rput(3.45,2.88){\scriptsize $(1,1)$}
\Cnode(4.0,2.72){02}\rput(4.35,2.72){\scriptsize $(0,2)$}
\Cnode(1.5,1.86){30}\rput(1.1,1.82){\scriptsize $(3,0)$}
\Cnode(2.5,1.86){21}\rput(2.12,2.02){\scriptsize $(2,1)$}
\Cnode(3.5,1.86){12}\rput(3.95,2.02){\scriptsize $(1,2)$}
\Cnode(4.5,1.86){03}\rput(4.85,1.82){\scriptsize $(0,3)$}
%\Cnode(1.0,1){40}\rput(0.7,0.72){\scriptsize $(4,0)$}
%\Cnode(2.0,1){31}\rput(2,0.72){\scriptsize $(3,1)$}
%\Cnode(3.0,1){22}\rput(3,0.72){\scriptsize $(2,2)$}
%$\Cnode(4.0,1){13}\rput(4,0.72){\scriptsize $(1,3)$}
%$\Cnode(5.0,1){04}\rput(5.25,0.72){\scriptsize $(0,4)$}

\ncline{00}{10}\ncline{00}{01}\ncline{01}{10}\ncline{20}{10}
\ncline{01}{02}\ncline{20}{11}\ncline{20}{30}\ncline{10}{11}
\ncline{01}{11}\ncline{02}{11}\ncline{21}{11}\ncline{12}{11}
\ncline{30}{31}\ncline{30}{21}\ncline{21}{12}
\ncline{21}{20}\ncline{02}{12}
\ncline{03}{12}\ncline{03}{02}\ncline{21}{22}\ncline{21}{31}
%\ncline{13}{12}\ncline{22}{12}\ncline{03}{04}\ncline{03}{13}
%\ncline{13}{04}\ncline{13}{22}\ncline{31}{22}\ncline{31}{40}
%\ncline{30}{40}\ncline{21}{22}\ncline{21}{31}
\end{pspicture}
\caption{\label{f1}\footnotesize {The $T_3$.}}
\end{center}
\end{figure}

\noindent {\bf Definition 1}\quad A radix-$m$  triangular mesh
network, denoted by $T_m$, consists of a set of vertices $V(T_m) =
\{(x,y)|0\leq x+y\leq m\}$ where any two vertices $(x_1,y_1)$ and
$(x_2,y_2)$ are connected by an edge if and only if
$|x_1-x_2|+|y_1-y_2|=1$, or $x_2=x_1+1$ and $y_2=y_1-1$, or
$x_2=x_1-1$ and $y_2=y_1+1$.

Figure 1 shows a $T_3$ network.

A tripy is a hierarchy structure based on triangular meshes.

\noindent{\bf Definition 2}\quad An $L$-level tripy, denoted by
$TP_L$, consists of a set of vertices $V(TP_L) = \{(k,(x,y))| 0\leq
k\leq L, 0\leq x + y \leq k\}$. Vertex $(k, (x,y))\in V(TP_L)$ is
said to be a vertex at level $k$ with the coordinate $(x,y)$. The
vertices at level $k$ form a network of $T_k$. Vertex $(k, (x,y))$
is also connected to vertices $(x,y)$, $(x + 1,y)$, and $(x,y + 1)$,
in level $k + 1$, as child vertices, and to vertex $(x-1,y)$ if
$x>0$, vertex $(x,y-1)$ if $y> 0$, and vertex $(x,y)$ if $x+y < k$,
as parents in level $k-1$.

\noindent {\bf Definition 3}\quad The integer simplex with dimension
$n$ and side-length $m$ is the graph $T_m^n$ whose vertices are the
nonnegative integer $(n + 1)$-tuples summing to $m$, with two
vertices adjacent when they differ by $1$ in two places and are
equal in all other places.

In other word,
$$V(T_m^n)=\{v_n\cdots v_1v_0 \, |\, \Sigma_{i=0}^{n}v_i=m,
v_i\, (i\in \{0,1,\ldots,n\})\ \mbox{is a non-negtive integer} \}.$$
Two vertices $u=u_n\cdots u_1u_0$ and $v=v_n\cdots v_1v_0$ are
adjacent if and only if $u_i=v_i-1$, $u_j=v_j+1$ and $u_k=v_k$ where
$k\in \{0,1,\ldots, n\}\setminus \{i,j\}$.

However, both the triangular mesh and the tripy are special kinds of
integer simplex. Let $\sigma_1$ be a mapping from $V(T_m)$ to
$V(T_m^2)$ defined by $\sigma_1 ((x,y))=(m-(x+y),x,y)$, for any
vertex $(x,y)\in V(T_m)$. Then, $\sigma_1$ is an isomorphism from
$T_m$ to $T_m^2$. Let $\sigma_2$ be a mapping from $V(TP_L)$ to
$V(T_L^3)$ defined by $\sigma_2 ((k,(x,y)))=(m-(k+x+y),k,x,y)$, for
any vertex $(k,(x,y))\in V(TP_L)$. Then, $\sigma_2$ is an
isomorphism from $TP_L$ to $T_L^3$.

According to the definition, the number of vertices of $T_m^n$ is
${n+m}\choose {m}$, and the minimum degree of $T_m^n$ is $n$. The
special cases of $T_m^n$ are the following. $T_1^d$ is $K_{d+1}$;
$T_m^1$ is a path with $m+1$ vertices; $T_m^2$ is the triangular
mesh network $T_m$, and is also named as triangulated triangle with
side length $m$ in~\cite{hms95}; $T_m^3$ is the triangular pyramid
$TP_m$ which is studied in~\cite{mw13, rs10}. Since $T_m^1$ is a
path, its connectivity is $1$. We assume $n\geq 2$ when we studied
the wide diameter and fault-diameter of $T_m^n$.

\section{Main results}

We first construct disjoint paths of certain lengths  joining two
vertices in $T_m^n$. A path $P$ joining vertices $u$ and $v$ is also
denoted by a $uv$-path. For any two vertices $u=u_n\cdots u_1u_0$
and $v=v_n\cdots v_1v_0$ of $T_m^n$, let
$h(u,v)=\frac{1}{2}\sum_{i=0}^{n}|u_i-v_i|$. By the vertex definiton
of $T_m^n$, $h(u,v)\leq m$.

\begin{lemma}\label{t1}
Let $u=u_n\cdots u_1u_0$ and $v=v_n\cdots v_1v_0$ be two vertices in
$T_m^n$. If there are $p$ bit positions such that $u_i>v_i$, and $q$
bit positions such that $u_i<v_i$, then there are $n+1-(p+q)+pq$
disjoint $uv$-paths, such that $pq$ of them are of length $h(u,v)$,
and the other $n+1-(p+q)$ of them are of length $h(u,v)+1$.
\end{lemma}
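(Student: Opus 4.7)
The plan is to partition the required $n+1-(p+q)+pq$ paths into two batches, one of length $h(u,v)$ and one of length $h(u,v)+1$, constructed explicitly. Write $P=\{i:u_i>v_i\}$, $Q=\{j:u_j<v_j\}$, and $R=\{k:u_k=v_k\}$, so $|P|=p$, $|Q|=q$, $|R|=n+1-p-q$; since $\sum_i u_i=\sum_i v_i=m$ we have $\sum_{i\in P}(u_i-v_i)=\sum_{j\in Q}(v_j-u_j)=h(u,v)$. Each edge of $T_m^n$ decreases the quantity $\tfrac12\sum_i|x_i-v_i|$ by at most one, so every $uv$-path has length at least $h(u,v)$, which justifies the lower bound on the claimed lengths.

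For the first batch I will construct, for each ordered pair $(i,j)\in P\times Q$, a shortest $uv$-path $P_{ij}$ of length $h(u,v)$ whose first edge is $u\to u+e_j-e_i$, i.e., one unit is transferred from coordinate $i$ to coordinate $j$. The continuation is chosen by a rule that depends on $(i,j)$: first exhaust the $\min(u_i-v_i,\,v_j-u_j)$ direct $(i,j)$-transfers, and then schedule the remaining transfers by a cyclic ordering of $P\setminus\{i\}$ and $Q\setminus\{j\}$ anchored at $(i,j)$, so that at every intermediate step the running ``partial transport'' multiset is uniquely determined by $(i,j)$. The $pq$ second vertices $u+e_j-e_i$ are already pairwise distinct, so the paths separate immediately from $u$; disjointness further along is reduced to checking that the running multisets never coincide.

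For the second batch, for each $k\in R$ I will construct a $uv$-path $P_k$ of length $h(u,v)+1$ that uses coordinate $k$ as a temporary detour. Fix an auxiliary pair $(i_0,j_0)\in P\times Q$. The first edge is $u\to u+e_k-e_{i_0}$, raising coordinate $k$ to $u_k+1$; a middle segment then performs $h(u,v)-1$ transfers from $P$ to $Q$ (with one unit originally destined for $j_0$ emitted from $k$ rather than $i_0$), leaving the walk at $v+e_k-e_{j_0}$; finally the edge $v+e_k-e_{j_0}\to v$ finishes the path. Every internal vertex of $P_k$ carries $u_k+1$ in coordinate $k$, whereas every vertex of any $P_{ij}$ keeps coordinate $k$ at value $u_k=v_k$, so the two batches are automatically disjoint. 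Two distinct paths $P_k$ and $P_{k'}$ are likewise disjoint, as $P_k$ has $k$-coordinate $u_k+1$ but $k'$-coordinate $u_{k'}$, incompatible with the internal vertices of $P_{k'}$.

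The main obstacle is the disjointness argument inside the first batch. A naive ``perform $(i,j)$ first, then a fixed global continuation'' scheme can collapse in small symmetric cases, for instance $p=q=3$ with all multiplicities equal to $1$, where three distinct pairs can otherwise produce the same mid-path vertex if the continuation does not depend on $(i,j)$. Ensuring the anchored cyclic continuation truly keeps all $pq$ running transport multisets distinct at every intermediate step is a careful bookkeeping argument that I expect to occupy the bulk of the proof; everything else (the lower bound, the detour construction, and the cross-batch and intra-second-batch disjointness) is comparatively routine once the first-batch construction is fixed.
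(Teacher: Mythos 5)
Your overall architecture matches the paper's: $pq$ shortest paths obtained by scheduling the unit transfers from the surplus positions $P$ to the deficit positions $Q$ in rotated orders, plus one detour path of length $h(u,v)+1$ through each position $k$ with $u_k=v_k$, with cross-batch and intra-second-batch disjointness read off from the $k$th coordinate being $u_k+1$ on every internal vertex of $P_k$. The lower bound and the entire second batch are fine and essentially identical to what the paper does.

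The gap is in the first batch, and you have correctly located it yourself: producing $pq$ \emph{pairwise disjoint} shortest paths is the entire content of the lemma there, and you do not prove it --- you only promise a ``careful bookkeeping argument'' for later. Moreover, the scheme you propose to bookkeep is not well defined: after exhausting the $\min(u_i-v_i,\,v_j-u_j)$ direct $(i,j)$-transfers, the remaining transfers in general still involve $i$ (when $u_i-v_i>v_j-u_j$) or $j$ (when the reverse holds), so ``a cyclic ordering of $P\setminus\{i\}$ and $Q\setminus\{j\}$'' does not account for all of them. The paper sidesteps this by indexing the $pq$ paths not by their first edge but by a pair of independent cyclic orderings: the $i$th rotation of $Q$ prescribes the order in which deficit positions are filled, the $j$th rotation of $P$ the order in which surplus positions are drained, and step $t$ of path $(i,j)$ transfers one unit from the currently active surplus position to the currently active deficit position. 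Disjointness then follows because an internal vertex $x$ determines the elapsed time $t=\frac12\sum_\ell|x_\ell-u_\ell|$, and for $0<t<h(u,v)$ the set of already-incremented positions of $Q$ is a cyclic interval whose starting point recovers $i$ (symmetrically for $P$ and $j$). If you want to keep your anchoring by first edges, you must replace your continuation rule by one with this ``partial configuration determines the index'' property; as written, the construction is incomplete at its critical step.
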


\begin{proof}
Without loss of generality, we may assume $u_i>v_i$ where $i\in\{n,
n-1, \ldots, n-p+1\}$, and $u_i<v_i$ where $i\in\{0, 1, \ldots,
q-1\}$. (Note that $p\geq 1$, $q\geq 1$, and $p+q\leq n+1$). Then,
$h(u,v)=\sum_{i=n-p+1}^{n}(u_i-v_i)=\sum_{i=0}^{q-1}(v_i-u_i)$.

There are $q$ different ways to make $u_{q-1}\cdots u_0$ up to
$v_{q-1}\cdots v_0$. We modify the bits $(q-1),(q-2),\ldots,1,0$ in
a rotational way.

The first way is that we modify the $(q-1)$th bit $u_{q-1}$ up to
$v_{q-1}$, and then modify the $(q-2)$th bit $u_{q-2}$ up to
$v_{q-2}$ and so on. At last modify the $0$th bit $u_0$ up to $v_0$.
We use $(q-1)-(q-2)-\cdots-1-0$ to denote this way. That is
$u_{q-1}u_{q-2}\cdots u_0\rightarrow (u_{q-1}+1)u_{q-2}\cdots u_0
\rightarrow \cdots\rightarrow v_{q-1}u_{q-2}\cdots u_0\rightarrow
v_{q-1}(u_{q-2}+1)\cdots u_0\rightarrow \cdots\rightarrow
v_{q-1}v_{q-2}\cdots v_1u_0\rightarrow v_{q-1}v_{q-2}\cdots
v_1(u_0+1)\rightarrow \cdots\rightarrow v_{q-1}\cdots v_0$.

%The second way is $(q-2)-(q-3)-\cdots-1-0-(q-1)$. That is, we modify
%$(q-2)$th bit $u_{q-2}$ up to $v_{q-2}$, and then modify the
%$(q-3)$th bit $u_{q-3}$ up to $v_{q-3}$ and so forth until the $0$th
%bit is reached, after which we modify the $(q-1)$th bit $u_{q-1}$ up
%to $v_{q-1}$.

In general, the $i$th way is
$(q-i)-(q-i-1)-\cdots-1-0-(q-1)-\cdots-(q-i+1)$ for any $1\leq i\leq
q$.

Similarly, there are $p$ different ways to make $u_n\cdots
u_{n-p+1}$ down to $v_n\cdots v_{n-p+1}$. We modify the bits
$n,(n-1),\ldots,(n+1-p)$ in a rotational way.

The first way is that we modify the $n$th bit $u_n$ down to $v_n$,
and then modify the $(n-1)$th bit $u_{n-1}$ down to $v_{n-1}$ and so
on. At last modify the $(n-p+1)$th bit $u_{n-p+1}$ down to
$v_{n-p+1}$. We use $n-(n-1)-\cdots-(n-p+1)$ to denote this way.

In general, the $j$th way is
$(n+1-j)-(n-j)-\cdots-(n-p+1)-n-(n-1)-\cdots-(n+2-j)$ for any $1\leq
j\leq q$.

Combining an $i$th way to make $u_{q-1}\cdots u_0$ up to
$v_{q-1}\cdots v_0$ and a $j$th way to make $u_n\cdots u_{n-p+1}$
down to $v_n\cdots v_{n-p+1}$, we obtain a $uv$-path of length
$h(u,v)$. There are $pq$ different ways to combine them and these
paths are disjoint.

If $p+q=n+1$, the conclusion is true.

If $p+q< n+1$, then $u_k=v_k<m$ for any bit $k$ where $q\leq k\leq
n-p$. We construct $n+1-(p+q)$ disjoint $uv$-paths of length
$h(u,v)+1$ in the following.

For any $k$ satisfies $q\leq k\leq n-p$, let
$u'=(u_n-1)u_{n-1}\cdots (u_k+1)\cdots u_1u_0$ and
$v'=v_nv_{n-1}\cdots (v_k+1)\cdots v_1(v_0-1)$. We have
$h(u',v')=h(u,v)-1$. We can construct a $u'v'$-path  $P'_k$ of
length $h(u',v')$ as above and the $k$th bit of every vertex on the
path is $v_k+1$. Then $P_k=u+P'_k+v$ is a $uv$-path of length
$h(u,v)+1$.

From the construction, these $(n+1-(p+q))$  $uv$-paths are disjoint
and they are also disjoint with the previous $pq$ $uv$-paths.
\end{proof}

For any two vertices of $T_m^n$, we construct $n+1-(p+q)+pq$
disjoint $uv$-paths. Since $p\geq 1$ and $q\geq 1$, then $pq\geq
(p+q)-1$, and $n+1-(p+q)+pq\geq n$. Hence, we construct at least $n$
disjoint $uv$-paths with length at most $m+1$. By Menger's Theorem,
$n\leq \kappa(T_m^n)$. Note that the minimum degree of $T_m^n$ is
$n$. By the well-known inequality $\kappa(G)\leq \lambda(G)\leq
\delta (G)$, we have the following corollary.

\begin{corollary}\label{l1}
$\kappa(T_m^n)= \lambda(T_m^n)= \delta (T_m^n)=n$ and
$d_n({T_m^n})\leq m+1$.
\end{corollary}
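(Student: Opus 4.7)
The plan is to extract the corollary directly from Lemma 1 together with the standard chain of inequalities between connectivity, edge-connectivity, and minimum degree. First, I would observe that Lemma 1 produces, for any two distinct vertices $u,v \in V(T_m^n)$, exactly $n+1-(p+q)+pq$ internally disjoint $uv$-paths, each of length at most $h(u,v)+1 \leq m+1$, where $p,q \geq 1$ are the numbers of coordinates in which $u$ strictly exceeds $v$ and vice versa.

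Second, I would use the trivial algebraic fact $(p-1)(q-1) \geq 0$, i.e.\ $pq \geq p+q-1$ for $p,q \geq 1$, to conclude that the number of disjoint paths supplied by Lemma 1 satisfies
\[
n+1-(p+q)+pq \;\geq\; n+1-(p+q)+(p+q)-1 \;=\; n.
\]
Hence there are at least $n$ internally disjoint $uv$-paths of length at most $m+1$ between any two vertices, which simultaneously yields $\kappa(T_m^n) \geq n$ by Menger's theorem and the bound $d_n(T_m^n) \leq m+1$.

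Third, I would invoke the fact recorded earlier in the paper that $\delta(T_m^n)=n$, together with the universal inequality $\kappa(G)\leq \lambda(G)\leq \delta(G)$, to sandwich $\kappa(T_m^n)$, $\lambda(T_m^n)$ and $\delta(T_m^n)$ all between $n$ and $n$, giving the three claimed equalities.

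There is essentially no obstacle here: all the real work has been absorbed into Lemma 1, and the corollary is a one-line consequence. The only tiny point worth flagging explicitly is that $pq \geq p+q-1$ for $p,q\geq 1$ so that the path count never drops below $n$, and that $h(u,v)+1 \leq m+1$ follows from the bound $h(u,v)\leq m$ noted just before Lemma 1.
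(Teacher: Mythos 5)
Your proposal is correct and follows essentially the same route as the paper: the paper likewise combines the path count $n+1-(p+q)+pq \geq n$ from Lemma~1 (via $pq \geq p+q-1$) with Menger's theorem, the length bound $h(u,v)+1 \leq m+1$, and the inequality $\kappa(G)\leq\lambda(G)\leq\delta(G)=n$ to obtain the corollary. No gaps.
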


For short we use $a^i$ to denote $i$ identical bit $a$. For example,
$m0^{3} = m000$. Let $u=u_n\cdots u_1u_0$ and $v=v_n\cdots v_1v_0$
be two vertices of $T_m^n$. Assume $u_i>v_i$ where $i\in\{n, n-1,
\ldots, n-p+1\}$, and $u_i<v_i$ where $i\in\{0, 1, \ldots, q-1\}$.
Then from the definition of $T_m^n$, the distance between $u$ and
$v$ is at least $h(u,v)$. By Lemma~\ref{t1}, there is a path of
length $h(u,v)$ joining them. Letting $u=m0^{n-1}$ and $v=0^{n-1}m$,
we have $h(u,v)=m$. The following corollary is obtained.

\begin{corollary}\label{l2}
For any two vertices $u$ and $v$ in $T_m^n$,  $d(u,v)=h(u,v)$.
Furthermore, $d(T_m^n)=m$.
\end{corollary}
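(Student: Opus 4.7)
The plan is to prove the two claims of Corollary~\ref{l2} by sandwiching $d(u,v)$ between $h(u,v)$ and $h(u,v)$, and then maximizing $h$ over the vertex set.

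First I would handle the upper bound $d(u,v)\le h(u,v)$. If $u=v$ this is trivial. Otherwise, set $p$ to be the number of indices $i$ with $u_i>v_i$ and $q$ the number with $u_i<v_i$; since $\sum u_i=\sum v_i=m$ and $u\neq v$ we have $p,q\ge 1$ and Lemma~\ref{t1} applies, producing a $uv$-path of length exactly $h(u,v)$.

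Next I would establish the matching lower bound $d(u,v)\ge h(u,v)$ by a one-step monotonicity argument on $h$. Fix $v$ and consider any edge $uw$ of $T_m^n$. By Definition~3 there exist indices $i\neq j$ with $w_i=u_i+1$, $w_j=u_j-1$ and $w_k=u_k$ for all other $k$. Then
\[
\sum_{k=0}^{n}|w_k-v_k|-\sum_{k=0}^{n}|u_k-v_k|=\bigl(|u_i+1-v_i|-|u_i-v_i|\bigr)+\bigl(|u_j-1-v_j|-|u_j-v_j|\bigr),
\]
and each bracket lies in $\{-1,+1\}$, so $|h(w,v)-h(u,v)|\le 1$. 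Iterating along any shortest $uv$-path of length $d(u,v)$ gives $h(u,v)\le d(u,v)\cdot 1$, and combined with the previous paragraph this yields $d(u,v)=h(u,v)$.

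For the diameter claim I would show $\max_{u,v\in V(T_m^n)}h(u,v)=m$. The inequality $h(u,v)\le m$ follows from $\sum_k |u_k-v_k|\le \sum_k(u_k+v_k)=2m$. Equality is realized by the pair $u=(m,0,\ldots,0)$ and $v=(0,m,0,\ldots,0)$ (the vertices $m0^{n}$ and $0m0^{n-1}$ in the notation of the paper), for which the supports are disjoint and $h(u,v)=m$. Combining with the previous two paragraphs gives $d(T_m^n)=m$. The only mildly subtle step is the monotonicity inequality $|h(w,v)-h(u,v)|\le 1$ for an edge $uw$; everything else is a direct appeal to Lemma~\ref{t1} or a counting argument.
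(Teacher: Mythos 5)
Your proposal is correct and follows essentially the same route as the paper: the upper bound comes from the explicit path of length $h(u,v)$ in Lemma~\ref{t1}, the lower bound from the fact that $h(\cdot,v)$ changes by at most $1$ across an edge (a step the paper merely asserts ``from the definition'' and you rightly spell out), and the diameter from $h(u,v)\le m$ together with an extremal pair of disjoint support. The only difference is cosmetic: you use the pair $m0^{n}$ and $0m0^{n-1}$ where the paper uses $m0^{n}$ and $0^{n}m$; both achieve $h=m$.
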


\begin{lemma}\label{lem1} Let $u=m0^{n}$ and $v=0^{n}m$ be two vertices  of
$T_m^n$. Any shortest $uv$-path must contain the vertex
$(m-1)0^{n-1}1$.

\end{lemma}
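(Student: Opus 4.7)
\medskip
\noindent\textbf{Proof plan.} The plan is to examine the very first step of any shortest $uv$-path and show that it is forced. By Corollary~\ref{l2}, the distance function on $T_m^n$ coincides with $h$, so $d(u,v)=h(u,v)=m$. Any shortest $uv$-path therefore has exactly $m$ edges and $m+1$ vertices $u=w^{(0)},w^{(1)},\ldots,w^{(m)}=v$, and since $h$ changes by at most $1$ along any edge, it must drop by exactly $1$ at every step. In particular, the second vertex $w^{(1)}$ must satisfy $h(w^{(1)},v)=m-1$.

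\medskip
Next I would enumerate the neighbors of $u=m0^{n}$. Because the only nonzero coordinate of $u$ is $u_n=m$, the adjacency rule forces every neighbor to be obtained by decrementing $u_n$ and incrementing $u_j$ for exactly one $j\in\{0,1,\ldots,n-1\}$. Thus $u$ has exactly $n$ neighbors, which I denote by $w_k=(m-1)\,0^{n-1-k}\,1\,0^{k}$ for $0\le k\le n-1$ (the $1$ sits in position $k$). A direct case split computes $h(w_k,v)$: if $k\ne 0$, position $0$ still contributes $m$, position $n$ contributes $m-1$, and position $k$ contributes $1$, so
\[
h(w_k,v)=\tfrac{1}{2}\bigl((m-1)+1+m\bigr)=m;
\]
whereas for $k=0$, position $n$ contributes $m-1$ and position $0$ contributes $|1-m|=m-1$, giving $h(w_0,v)=m-1$.

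\medskip
Comparing with the constraint $h(w^{(1)},v)=m-1$ obtained in the first paragraph, the unique admissible neighbor is $w_0=(m-1)0^{n-1}1$. Hence every shortest $uv$-path contains this vertex (in fact at position $1$), which is even stronger than the stated conclusion. The argument is short and essentially bookkeeping; the one conceptual point worth emphasizing, and the only place a reader could stumble, is the observation that from the ``corner'' vertex $u$, any initial increment that does not land at position $0$ merely relocates a unit within the ``wrong'' end of the tuple, so such a step cannot bring us any closer to $v$ in the $h$-metric.
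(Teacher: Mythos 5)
Your proposal is correct and follows essentially the same route as the paper's own proof: both reduce the claim to inspecting the $n$ neighbors of the corner vertex $u=m0^{n}$ and observing that $(m-1)0^{n-1}1$ is the only one whose distance to $v$ is $m-1$ rather than $m$, so it must be the second vertex of any shortest $uv$-path. Your version merely spells out the computation of $h(w_k,v)$ that the paper leaves implicit.
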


\begin{proof} Since $h(u,v)=m$, the length of a shortest $uv$-path is $m$.

Let $u'$ be a neighbor of $u$.  The leftmost bit of $u'$ is $m-1$,
and the other bits are $0$ except that one bit is $1$. If $u'$ is
not the vertex $(m-1)0^{n-1}1$, then the distance between $u'$ and
$v$ is $m$. Hence, the length of a $uv$-path containing $u'$ is at
least $m+1$. In other word, any shortest $uv$-path must contain the
vertex $(m-1)0^{n-1}1$.
\end{proof}

\begin{lemma}\label{l3}  $D_{2}(T_m^n)\geq m+1$.
\end{lemma}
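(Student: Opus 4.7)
The plan is to exhibit a single vertex whose removal forces the distance between some pair to increase by at least $1$, which suffices since $D_2(T_m^n) = \max\{d(T_m^n - F): |F| \le 1\}$.

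The natural choice is the pair $u = m0^n$ and $v = 0^n m$ studied in Lemma~\ref{lem1}, together with the ``bottleneck'' vertex $w = (m-1)0^{n-1}1$. By Corollary~\ref{l2} we have $d_{T_m^n}(u,v) = h(u,v) = m$, and Lemma~\ref{lem1} asserts that \emph{every} shortest $uv$-path passes through $w$. Consequently, in the subgraph $T_m^n - \{w\}$ no $uv$-path of length $m$ survives, so $d_{T_m^n - \{w\}}(u,v) \ge m+1$.

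To ensure this distance is actually finite (so that we can meaningfully conclude $D_2 \ge m+1$ rather than $D_2 = \infty$), I would invoke Corollary~\ref{l1}, which gives $\kappa(T_m^n) = n \ge 2$; hence $T_m^n - \{w\}$ is still connected. Setting $F = \{w\}$ with $|F| = 1 < 2$ then yields
\[
D_2(T_m^n) \;\ge\; d_{T_m^n - F}(u,v) \;\ge\; m+1,
\]
as required.

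There is no real obstacle here: the whole argument is essentially a one-line consequence of Lemma~\ref{lem1} combined with the fact that $T_m^n$ is $2$-connected. The only thing to be careful about is to cite Lemma~\ref{lem1} precisely (namely, that the obstruction vertex $(m-1)0^{n-1}1$ lies on every shortest $uv$-path, not merely on some shortest $uv$-path), since it is exactly this ``unavoidability'' that forces the distance to strictly exceed $m$ after its removal.
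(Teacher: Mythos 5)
Your proof is correct and follows essentially the same route as the paper: remove the bottleneck vertex $w=(m-1)0^{n-1}1$ guaranteed by Lemma~\ref{lem1} to lie on every shortest path between $u=m0^n$ and $v=0^nm$, forcing $d_{T_m^n-w}(u,v)\geq m+1$. You even state the bottleneck vertex correctly where the paper's proof contains an apparent typo ($(m-1)0^{n-2}1$), and your explicit appeal to $2$-connectedness for finiteness is a small but welcome extra precaution.
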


\begin{proof}
Since the fault-diameter $D_{2}(T_m^n)$ is  defined as the largest
distance between any pair of vertices if a fault occurs and our goal
is to give a lower bound of this. We may select the following three
vertices as follows. Let $u=m0^{n}$ and $v=0^{n}m$ be two vertices
of $T_m^n$, and let $w= (m-1)0^{n-2}1$ be the faulty vertex and
$G=T_m^n-w$. By Lemma~\ref{lem1}, we have $d_{T_m^n}(u,v)=m$ and any
shortest path in $T_m^n$ must contain the vertex $w$. Since
$G=T_m^n-w$, $d_G(u,v) > d_{T_m^n}(u,v)$. We have $d_G(u,v)\geq
m+1$. Hence, $D_2(T_m^n)\geq m+1$.
\end{proof}

We have shown that  $\kappa (T_m^n)=n$. We determine the
fault-diameter $D_\omega(T_m^n)$ and wide diameter $d_\omega(T_m^n)$
for any $2\leq \omega \leq n$ in the following.

\begin{theorem}\label{t2}
$d_\omega(T_m^n)=D_\omega(T_m^n)=d(T_m^n)+1$ for $2 \leq \omega\leq
n$.
\end{theorem}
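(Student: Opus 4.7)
The plan is to show that the chain of inequalities
\[
m+1 \;\le\; D_\omega(T_m^n) \;\le\; d_\omega(T_m^n) \;\le\; m+1
\]
holds for every $\omega$ with $2\le \omega\le n$, which immediately forces equality throughout and gives the theorem since $d(T_m^n)=m$ by Corollary~\ref{l2}.

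For the upper bound on $d_\omega$, I would appeal to Corollary~\ref{l1}, which already delivers $d_n(T_m^n)\le m+1$. Using the monotonicity $d_2(G)\le d_3(G)\le\cdots\le d_n(G)$ recalled in the introduction, this gives $d_\omega(T_m^n)\le m+1$ for every $\omega\le n$. Concretely, the $n+1-(p+q)+pq\ge n$ internally disjoint paths constructed in Lemma~\ref{t1}, all of length at most $h(u,v)+1\le m+1$, supply the required $\omega$ paths.

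For the lower bound on $D_\omega$, I would start from Lemma~\ref{l3}, which gives $D_2(T_m^n)\ge m+1$ via the obstruction vertex $w=(m-1)0^{n-2}1$, and then invoke the monotonicity $D_2(G)\le D_3(G)\le\cdots\le D_n(G)$ to propagate this to all $\omega\ge 2$. Finally, the general inequality $D_\omega(G)\le d_\omega(G)$ recalled from~\cite{lcch98} closes the sandwich.

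Essentially there is no main obstacle left at this stage: all the substantive work has been done in Lemma~\ref{t1} (constructing the disjoint paths of length at most $m+1$) and in Lemma~\ref{lem1}/Lemma~\ref{l3} (forcing every shortest $uv$-path between the two ``antipodal'' vertices $m0^n$ and $0^nm$ through the single vertex $(m-1)0^{n-1}1$, so that removing a single vertex already increases the distance). The proof of Theorem~\ref{t2} is then just the assembly of these two bounds with the two monotonicity chains and the universal inequality $D_\omega\le d_\omega$, yielding $d_\omega(T_m^n)=D_\omega(T_m^n)=m+1=d(T_m^n)+1$.
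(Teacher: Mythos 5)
Your proposal is correct and follows essentially the same route as the paper: the upper bound $d_\omega(T_m^n)\le m+1$ from Corollary~\ref{l1} plus monotonicity of $d_\omega$, the lower bound $D_2(T_m^n)\ge m+1$ from Lemma~\ref{l3} plus monotonicity of $D_\omega$, and the universal inequality $D_\omega(G)\le d_\omega(G)$ to close the sandwich. (One trivial note: the obstruction vertex should be written $(m-1)0^{n-1}1$ as in Lemma~\ref{lem1}; the form $(m-1)0^{n-2}1$ appearing in Lemma~\ref{l3} is a typo in the paper, which you in fact correct later in your own text.)
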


\begin{proof}
We have $d_n(T_m^n)\leq m+1$ by Corollary~\ref{l1} and $D_2(T_m^n)
\ge m+1$ by Lemma~\ref{l3}. Since $D_2(T_m^n) \le d_2(T_m^n)$,
$D_{n}(T_m^n) \le d_{n}(T_m^n)$, $d_2(T_m^n) \leq \ldots \leq
d_{n}(T_m^n)$ and $D_2(T_m^n)\leq D_3(T_m^n)\leq \ldots \leq
D_{n}(T_m^n)$, we have $d_\omega(T_m^n)=D_\omega(T_m^n)=m+1$ for
$2\leq \omega \leq n$.

By Corollary~\ref{l2}, $d(T_m^n)=m$. The conclusion is true.
\end{proof}

For the tripy, we have the conclusion.

\begin{corollary}\label{c2}
$d_\omega(TP_L)=D_\omega(TP_L)=d(TP_L)+1$ for $2 \leq \omega\leq 3$.
\end{corollary}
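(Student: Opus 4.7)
The plan is to deduce this as an immediate application of Theorem~\ref{t2} to the special case $n=3$, $m=L$, combined with the isomorphism between $TP_L$ and $T_L^3$ established in Section 2.

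First, I would recall the map $\sigma_2 \colon V(TP_L) \to V(T_L^3)$ defined by $\sigma_2((k,(x,y))) = (L-(k+x+y),k,x,y)$, which the paper has already noted is an isomorphism of graphs. Since graph isomorphism preserves shortest-path distances, connectivity, and the existence of any collection of internally vertex-disjoint paths of given lengths between given vertex pairs, it preserves all three quantities appearing in the corollary: $d(\cdot)$, $D_\omega(\cdot)$, and $d_\omega(\cdot)$. Hence $d(TP_L) = d(T_L^3)$, $D_\omega(TP_L) = D_\omega(T_L^3)$, and $d_\omega(TP_L) = d_\omega(T_L^3)$ for every valid~$\omega$.

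Next I would invoke Theorem~\ref{t2} with $n=3$ and $m=L$. Since the hypothesis requires $n \geq 2$, which is satisfied, and the valid range of $\omega$ is $2 \leq \omega \leq n = 3$, the theorem yields $d_\omega(T_L^3) = D_\omega(T_L^3) = d(T_L^3) + 1$ for $2 \leq \omega \leq 3$. Transporting this equality back to $TP_L$ via $\sigma_2$ gives $d_\omega(TP_L) = D_\omega(TP_L) = d(TP_L) + 1$ for $2 \leq \omega \leq 3$, which is the desired conclusion.

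There is essentially no obstacle here; the entire content of the corollary lies in the isomorphism $\sigma_2$, which was established earlier, and in Theorem~\ref{t2}. The only point worth mentioning explicitly in the write-up is that the minimum degree (and hence connectivity) of $T_L^3$ equals $3$ by Corollary~\ref{l1}, which confirms that $\omega = 3$ is within the admissible range and that the corollary covers the full nontrivial range $2 \leq \omega \leq \kappa(TP_L)$.
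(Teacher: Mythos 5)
Your proposal is correct and matches the paper's (implicit) argument: the paper states this corollary as an immediate consequence of Theorem~\ref{t2} together with the isomorphism $\sigma_2$ between $TP_L$ and $T_L^3$ established in Section~2, exactly as you do. Your additional remark that $\kappa(T_L^3)=3$ by Corollary~\ref{l1} correctly confirms that $\omega=3$ is admissible.
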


Theorem~\ref{t2} shows that the wide diameter equals the
fault-diameter for the integer simplex $T_m^n$. They are increased
only by one over the traditional diameter. Since the triangular
pyramid is a special integer simplex, the tripy enjoys the similar
property of strong resilience like the hypercube. And, the results
add to the attractiveness of the tripy as compared to the hypercube.

\section*{Acknowledgements}

The author would like to thank the support from NSFC (No. 11101378)
and ZJNSF (No. LY14A010009). We also thank Douglas B. West who
introduced the integer simplex to us which makes the definitions of
the triangular mesh and the triangular pyramid more easily and
clearly.

\setlength{\baselineskip}{18pt}

\end{document}